\documentclass[conference]{IEEEtran}
\IEEEoverridecommandlockouts
\usepackage{cite}
\usepackage{amsmath,amssymb,amsfonts}
\usepackage{algorithmic}
\usepackage{graphicx}
\usepackage{textcomp}
\usepackage{xcolor}
\def\BibTeX{{\rm B\kern-.05em{\sc i\kern-.025em b}\kern-.08em
    T\kern-.1667em\lower.7ex\hbox{E}\kern-.125emX}}
\newtheorem{remark}{Remark}
\begin{document}

\title{Breakdown of Edgeworth Expansion\\ in Finite-Blocklength Regime\\ and Exact Absorption via $q$-Deformation\\
}

\author{\IEEEauthorblockN{Hiroki Suyari}
\IEEEauthorblockA{\textit{Graduate School of Informatics,} 
\textit{Chiba University,}
Chiba, Japan \\
suyari@faculty.chiba-u.jp, suyarilab@gmail.com}
}

\maketitle

\begin{abstract}
This paper addresses the structural breakdown of the Edgeworth expansion in the finite-blocklength (FBL) regime, where conventional asymptotic approximations yield unphysical negative probabilities in the deep-tail region. We propose a $q$-deformed framework that resolves this inconsistency by replacing additive polynomial perturbations with a geometric deformation of the information density space. Motivated by the linearization of nonlinear dynamics, we prove that dynamically scaling the $q$-logarithmic parameter exactly absorbs the third-order skewness while preserving global nonnegativity. We establish a universal asymptotic matching, demonstrating that the framework encapsulates higher-order asymptotic scales. Numerical results confirm that the proposed method matches the state-of-the-art precision of the Cornish-Fisher bound without the risk of negative probabilities. The framework offers a robust and computationally stable foundation for evaluating operational limits in ultra-reliable communications such as 6G and URLLC.
\end{abstract}

\section{Introduction}
\label{sec:introduction}

In finite-blocklength (FBL) information theory, pioneered by Polyanskiy, Poor, and Verdú \cite{Polyanskiy2010} and developed in subsequent literature (e.g., \cite{Tan2014}), the precise evaluation of decoding error probabilities at short blocklengths is required. To characterize the fundamental limits beyond the asymptotic capacity, the distribution of the information density must be analyzed.

This finite-size analysis is essential for the design of modern ultra-reliable low-latency communications (URLLC) envisioned for 6G networks and IoT systems \cite{Shirvanimoghaddam2019}. In such applications, strict latency constraints dictate the use of extremely short packets (often $n \sim 100$ to $200$ channel uses), while simultaneously demanding ultra-low target error probabilities (e.g., $\epsilon \sim 10^{-5}$ to $10^{-9}$). Consequently, the operational point of these systems lies deep within the tail of the information density distribution, where the conventional Gaussian approximation (Central Limit Theorem) severely underestimates the physical communication penalty.

Conventionally, the finite-size scaling of the error probability is evaluated using the normal approximation. To account for the asymmetry inherent in practical channels or sources, higher-order asymptotic expansions---most notably the Edgeworth expansion and its inverted form, the Cornish-Fisher expansion---are standardly employed to refine the tail bounds \cite{Feller1971}. These perturbative approaches append additive orthogonal polynomials (e.g., Hermite polynomials) to the Gaussian baseline to correct for skewness and kurtosis.

However, while additive polynomial corrections provide accurate local approximations near the bulk of the distribution, they structurally break down in the deep-tail regime (i.e., rare events associated with ultra-low error probabilities). Because polynomials inevitably exhibit oscillatory behavior and roots, truncating the Edgeworth expansion at a finite order generates unphysical zero-crossing regions, yielding strictly negative probabilities.

In this paper, we propose a structural resolution to this mathematical inconsistency. Rather than relying on additive polynomial perturbations, we demonstrate that finite-size fluctuations can be exactly and globally absorbed by algebraically deforming the information measure itself. By introducing a dynamically scaled $q$-deformed framework---originating from generalized statistical mechanics \cite{Tsallis1988,Tsallis2009}---we prove that the non-Gaussian skewness is absorbed into an exponential structure, thereby strictly preserving the nonnegativity of the probability measure while exactly matching the established FBL asymptotic bounds.

\section{Breakdown of Conventional Edgeworth Expansion}
\label{sec:breakdown}

Let $X^n = (X_1, X_2, \dots, X_n)$ be a sequence of length $n$ generated by a discrete memoryless source (DMS) $P_X$. The fundamental random variable governing the finite-blocklength performance is the information density, defined as $i(X^n) = \sum_{k=1}^n -\ln P_X(X_k)$. 

To rigorously define the macroscopic fluctuations, let $H = \mathbb{E}[-\ln P_X(X)]$ denote the single-letter entropy, and $V = \mathrm{Var}[-\ln P_X(X)]$ denote the single-letter varentropy. Because the source is memoryless, the expectation of the information density over the entire sequence is the \textit{block entropy} $nH$, and its variance is the \textit{block varentropy} $nV$. Furthermore, let $T = \mathbb{E}[(-\ln P_X(X) - H)^3]$ be the single-letter third central moment.

To evaluate the distribution of the information density, it is conventional to define the standardized macroscopic fluctuation $Z = (i(X^n) - nH)/\sqrt{nV}$. The standard third-order Edgeworth expansion for its probability density function (PDF) $f_Z(z)$ is mathematically expressed as:
\begin{equation}
    f_Z(z) = \phi(z) \left[ 1 + \frac{\gamma_1}{6\sqrt{n}} (z^3 - 3z) \right] + O(n^{-1}),
    \label{eq:edgeworth_pdf}
\end{equation}
where $\phi(z) = \frac{1}{\sqrt{2\pi}}e^{-z^2/2}$ is the standard normal PDF, and $\gamma_1 = T/V^{3/2}$ represents the skewness of the single-letter information density.

While \eqref{eq:edgeworth_pdf} provides a mathematically rigorous asymptotic expansion as $n \to \infty$, its truncation at $O(n^{-1})$ exhibits a structural inconsistency in the deep-tail regime for finite $n$, particularly in the large deviation regime.

To make this explicit, let us configure a highly asymmetric binary memoryless source (BMS) with symbol probabilities $P_X(0) = 0.1$ and $P_X(1) = 0.9$. 
For a short blocklength of $n=12$, the skewness of the information density evaluates to a strictly positive value ($\gamma_1 / \sqrt{n} \simeq 0.77$). 

Substituting these parameters into \eqref{eq:edgeworth_pdf} reveals that in the deep-tail region where $z \lesssim -2.45$, the polynomial correction term $\frac{\gamma_1}{6\sqrt{n}} (z^3 - 3z)$ evaluates to a value less than $-1$. 
Consequently, the expansion yields a strictly negative probability density, $f_Z(z) < 0$, which mathematically violates the nonnegativity axiom of probability. Furthermore, attempting to resolve this by appending the fourth-order kurtosis correction only amplifies the violation of nonnegativity, leading to a deeper negative region. As illustrated in Fig.~\ref{fig:breakdown}, this zero-crossing invalidates the additive perturbative approach for deep-tail finite-blocklength analysis. To resolve this, rather than appending additive polynomials, we propose absorbing this finite-size skewness structurally via the $q$-deformed algebraic framework of Tsallis statistics.

\begin{figure}[htbp]
    \centerline{\includegraphics[width=\columnwidth]{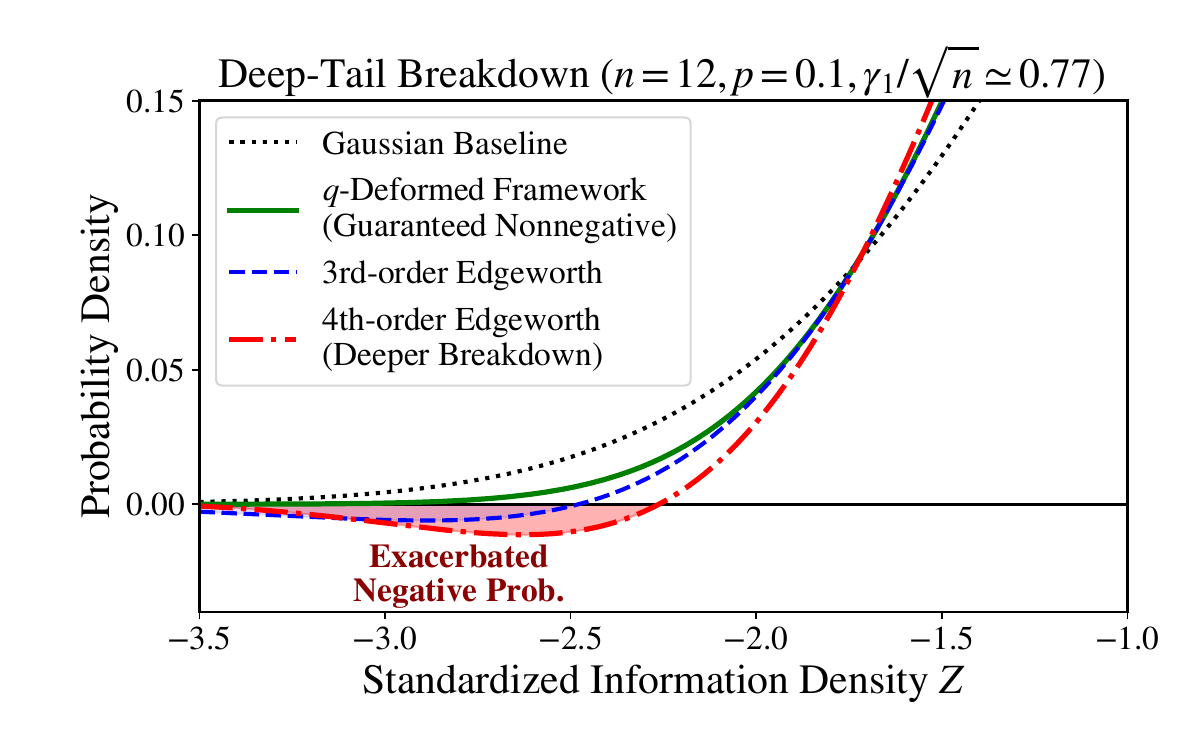}}
    \caption{Breakdown of the conventional Edgeworth expansion in the deep-tail regime for a binary memoryless source ($n=12, P_X(0)=0.1$). The third-order Edgeworth PDF crosses zero around $z \simeq -2.45$. Appending the fourth-order correction further exacerbates this unphysical negative probability. In contrast, the proposed $q$-deformed framework structurally preserves nonnegativity across the entire domain.}
    \label{fig:breakdown}
\end{figure}

For a given tail probability $\epsilon \in (0,1)$, the third-order asymptotic threshold (quantile) of the information density, $L_{\text{edge}}$, is rigorously evaluated via the Cornish-Fisher expansion \cite{Tan2014, Feller1971}:
\begin{equation}
    L_{\text{edge}} = nH + \sqrt{nV} Z_{\epsilon} + \frac{T}{6V} (Z_{\epsilon}^2 - 1) + O(n^{-1/2}),
    \label{eq:edgeworth_expansion}
\end{equation}
where $Z_\epsilon$ is the $(1-\epsilon)$-quantile of the standard normal distribution. The $O(1)$ term $\frac{T}{6V} (Z_{\epsilon}^2 - 1)$ represents the finite-size correction for skewness. Our primary goal in this paper is the exact structural absorption of this $O(1)$ skewness penalty via the $q$-deformed framework, thereby guaranteeing strict probability nonnegativity.

\section{Exact Structural Absorption via $q$-deformed framework}
\label{sec:q_algebra}

Before detailing the exact algebraic formulation, we clarify the theoretical motivation for introducing the $q$-logarithm into finite-blocklength (FBL) information theory.

In classical asymptotic theory ($n \to \infty$), the standard information density adheres to the conventional logarithm, $-\ln P(X^n)$. From a structural perspective, adhering to the standard logarithm implicitly adopts the Gaussian measure as the baseline reference, owing to the strict additivity and its associated Central Limit Theorem. Consequently, any attempt to incorporate higher-order FBL penalties (such as skewness) within this standard logarithmic framework must rely on multiplicative polynomial perturbations applied to the probability density (i.e., the Edgeworth expansion). As demonstrated in Section II, this polynomial-based correction is incompatible with the global tail structure, leading to a violation of the nonnegativity axiom in the deep-tail regime.

To resolve this structural inconsistency, the correction must be a geometric deformation of the information measure itself. The mathematical necessity of the $q$-logarithm is best understood through its differential characterization. The standard logarithm uniquely linearizes the simplest linear dynamics $dy/dx = y$, yielding $d(\ln y)/dx = 1$, which underpins classical additivity. When finite-size fluctuations introduce nonlinear scaling into the probability measure, the canonical structural deformation of this dynamic is $dy/dx = y^q$. This separable differential equation is uniquely linearized by the $q$-logarithm, yielding a constant derivative $d(\ln_q y)/dx = 1$, which systematically restores the additive structure in the deformed information space. 

Thus, the $q$-logarithm is not an arbitrary curve-fitting function, but the unique canonical operator that mathematically linearizes the simplest nonlinear dynamics. Formally, it is defined for $q \neq 1$ as $\ln_q x := \frac{x^{1-q}-1}{1-q}$, which smoothly recovers the standard natural logarithm $\ln x$ in the limit $q \to 1$. Originally developed in generalized statistical mechanics \cite{Tsallis2009}, this algebraic approach is deeply connected to the generalized law of error \cite{Suyari2005}, providing a mathematically rigorous foundation for handling non-Gaussian anomalous fluctuations without violating probability axioms.

By embedding this $q$-deformed structure into the density space and treating $q$ as a dynamic parameter dependent on the blocklength $n$, we effectively absorb the FBL penalty. Let $W_n := i(X^n) - nH$ be the centered conventional information density, which scales as $O(n^{1/2})$. Instead of appending perturbative polynomials to the distribution, we apply the $q$-logarithmic transformation to the exponential fluctuation, yielding $\ln_q(\exp(W_n)) = \frac{\exp((1-q)W_n)-1}{1-q}$. To strictly preserve the exact mean $\mathbb{E}[i_{q_n}(X^n)] = nH$, we center this deformation via the moment-generating function (MGF) of $W_n$. Thus, with a dynamic scaling parameter $q_n$, we define the centralized $q$-information density as:
\begin{equation}
    i_{q_n}(X^n) := nH + \frac{\exp((1-q_n)W_n) - \mathbb{E}[\exp((1-q_n)W_n)]}{1-q_n}.
    \label{eq:q_info_density}
\end{equation}
By expanding \eqref{eq:q_info_density} algebraically, the dominant fluctuation is embedded into the exponential structure. We formally state the exact absorption of the finite-size penalty as follows:

\newtheorem{theorem}{Theorem}
\begin{theorem}[Exact Structural Absorption]
\label{thm:absorption}
Let $L_q$ be the threshold of the $q$-generalized measure satisfying the probability bound for a target error probability $\epsilon \in (0, 1)$:
\begin{equation}
    P\left( i_{q_n}(X^n) \le L_q \right) \ge 1-\epsilon.
\label{eq:q_fluctuation_bound}
\end{equation}
By setting the dynamic scaling law of the deformation parameter as $1-q_n = \alpha n^{-1}$ with the tuning constant:
\begin{equation}
    \alpha = \frac{T}{3V^2},
\label{eq:alpha_tuning}
\end{equation}
where $V$ is the single-letter varentropy and $T$ is the single-letter third central moment, the threshold $L_q$ asymptotically coincides with the third-order finite-size boundary $L_{\text{edge}}$ derived from the Cornish-Fisher expansion up to $O(n^{-1/2})$.
\end{theorem}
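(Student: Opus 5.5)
The plan is to exploit that the $q$-deformation in \eqref{eq:q_info_density} is a strictly increasing map of the centered fluctuation $W_n$. Tail events for $i_{q_n}(X^n)$ therefore transform into tail events for $W_n$, and the threshold $L_q$ can be read off as the image of a quantile. Write $\beta_n := 1-q_n = \alpha n^{-1}$ and
\begin{equation*}
g_n(w) := \frac{e^{\beta_n w} - M_n}{\beta_n}, \qquad M_n := \mathbb{E}[e^{\beta_n W_n}],
\end{equation*}
so that $i_{q_n}(X^n) = nH + g_n(W_n)$. Since $g_n'(w) = e^{\beta_n w} > 0$ for every $w$ and every sign of $\beta_n$, the event $\{i_{q_n}(X^n) \le L\}$ equals $\{W_n \le g_n^{-1}(L - nH)\}$. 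Hence the smallest $L_q$ satisfying \eqref{eq:q_fluctuation_bound} is $L_q = nH + g_n(w_\epsilon)$, where $w_\epsilon$ is the $(1-\epsilon)$-quantile of the reference fluctuation. Because $g_n$ is monotone on the whole line, this step involves no truncation. It is also where nonnegativity is inherited for free: the law of $i_{q_n}$ is a push-forward of a genuine probability measure.

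Next, I expand $g_n$ in the regime $w = O(\sqrt{n})$, where $\beta_n w = O(n^{-1/2}) \to 0$. For the centering constant I use the cumulant expansion $\ln M_n = \beta_n^2 nV/2 + \beta_n^3 nT/6 + \dots$, which gives $M_n = 1 + \alpha^2 V/(2n) + O(n^{-2})$. Substituting into $g_n$ yields
\begin{equation*}
g_n(w) = w + \frac{\alpha}{2n}\,(w^2 - nV) + O\!\left(\frac{|w|^3}{n^2}\right) + O(n^{-1}),
\end{equation*}
and the remainder is $O(n^{-1/2})$ uniformly for $|w| \le C\sqrt{n}$. The deformation thus acts as a quadratic, Hermite-like reshaping $w \mapsto w + \tfrac{\alpha}{2n}(w^2 - nV)$. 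Its shape is that of the Edgeworth skewness term, but it enters through a monotone reparametrization rather than an additive correction to the density.

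The key step is to evaluate $g_n$ at the Gaussian baseline quantile $w_\epsilon = \sqrt{nV}\,Z_\epsilon$. Here the $q$-deformed space is taken as the one in which the fluctuation is treated as Gaussian, so that the skewness is carried entirely by the geometry of $g_n$. This gives
\begin{equation*}
g_n(\sqrt{nV}Z_\epsilon) = \sqrt{nV}\,Z_\epsilon + \frac{\alpha V}{2}\,(Z_\epsilon^2 - 1) + O(n^{-1/2}).
\end{equation*}
With $\alpha = T/(3V^2)$ from \eqref{eq:alpha_tuning}, the coefficient is $\alpha V/2 = T/(6V)$. Comparing with \eqref{eq:edgeworth_expansion} then gives $L_q = L_{\text{edge}} + O(n^{-1/2})$. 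The tuning constant is precisely the one that makes the $O(1)$ term match.

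I expect the main obstacle to be justifying this baseline identification rigorously. If $w_\epsilon$ were instead taken as the exact quantile of $W_n$, the Cornish-Fisher expansion would already supply its own $T(Z_\epsilon^2-1)/(6V)$ term, and the deformation would then double the skewness correction. The proof must therefore state explicitly that the Gaussian measure is the reference in the deformed coordinates, with $M_n$ computed consistently from the Gaussian MGF $e^{\beta_n^2 nV/2}$, which agrees to the needed order. It must also check that every remainder (the cubic Taylor term, the $M_n$ expansion, and the quantile perturbation) stays $O(n^{-1/2})$ for fixed $\epsilon$. The remainder control is routine once $|Z_\epsilon|$ is bounded. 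The conceptual point about which quantile is deformed is where the argument must be stated carefully.
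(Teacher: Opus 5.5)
Your proposal follows the paper's proof: you expand the centered deformation to second order in $1-q_n$, evaluate it at the Gaussian quantile $\sqrt{nV}Z_\epsilon$, and match the resulting $O(1)$ coefficient $\alpha V/2$ with $T/(6V)$, while your monotonicity argument for $g_n$ and your expansion of $M_n$ make explicit steps the paper takes for granted. The subtlety you flag is real, since evaluating at the true quantile of $W_n$ would give the coefficient $\bigl(\tfrac{T}{6V}+\tfrac{\alpha V}{2}\bigr)(Z_\epsilon^2-1)$, i.e.\ a doubled skewness term; the paper avoids this the same way you do, though only implicitly, by substituting $W_n=\sqrt{nV}Z$ and taking $Z_\epsilon$ as the standard normal quantile.
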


\begin{IEEEproof}
To evaluate the distribution of the $q$-information density $i_{q_n}(X^n)$, we apply a Taylor expansion to the definition in \eqref{eq:q_info_density} around $q_n = 1$. Let $\delta_n := 1-q_n$. Expanding the exponential function $\exp(\delta_n W_n)$ up to the second order yields:
\begin{align}
    i_{q_n}(X^n) &= nH + \frac{1}{\delta_n} \left[ \left(1 + \delta_n W_n + \frac{\delta_n^2}{2}W_n^2 + \dots \right) \right. \notag \\
    &\quad \left. - \mathbb{E}\left[ 1 + \delta_n W_n + \frac{\delta_n^2}{2}W_n^2 + \dots \right] \right].
\end{align}
Since the centered macroscopic fluctuation $W_n = i(X^n) - nH$ has a zero mean ($\mathbb{E}[W_n] = 0$) and its variance is the block varentropy ($\mathbb{E}[W_n^2] = nV$), the centralized expansion simplifies to:
\begin{equation}
    i_{q_n}(X^n) = nH + W_n + \frac{\delta_n}{2}(W_n^2 - nV) + O(\delta_n^2).
    \label{eq:iq_expansion_Wn}
\end{equation}
To analyze this in terms of the standardized macroscopic fluctuation, we substitute $W_n = \sqrt{nV}Z$. Let $Z_\epsilon$ be the $(1-\epsilon)$-quantile of the standard normal distribution $\mathcal{N}(0,1)$, which corresponds to the standardized threshold for the target error probability $\epsilon$. Evaluating the $q$-generalized random variable at this specific quantile $Z_\epsilon$ gives the boundary $L_q$:
\begin{align}
    L_q &= nH + \sqrt{nV}Z_\epsilon + \frac{\delta_n}{2} \left( (\sqrt{nV}Z_\epsilon)^2 - nV \right) + O(n^{-1/2}) \notag \\
    &= nH + \sqrt{nV}Z_\epsilon + \frac{\delta_n nV}{2}(Z_\epsilon^2 - 1) + O(n^{-1/2}).
    \label{eq:Lq_derivation}
\end{align}
In classical probability theory, the third-order threshold bounded by the standard information density is rigorously given by the Cornish-Fisher expansion $L_{\text{edge}}$ in \eqref{eq:edgeworth_expansion}. To structurally absorb the finite-size skewness without introducing additive polynomials, we enforce the mathematical equivalence $L_q = L_{\text{edge}}$ up to the $O(1)$ constant term. Equating the $O(1)$ correction term in \eqref{eq:Lq_derivation} with the corresponding term $\frac{T}{6V}(Z_\epsilon^2 - 1)$ in \eqref{eq:edgeworth_expansion} yields:
\begin{equation}
    \frac{\delta_n nV}{2}(Z_\epsilon^2 - 1) = \frac{T}{6V}(Z_\epsilon^2 - 1).
\end{equation}
Since this geometric equivalence must hold for any target tail probability $\epsilon$ (i.e., for any quantile $Z_\epsilon$), we equate the coefficients:
\begin{equation}
    \frac{(1-q_n)nV}{2} = \frac{T}{6V} \implies 1-q_n = \frac{T}{3V^2}n^{-1}.
\end{equation}
This establishes the exact dynamic scaling law $1-q_n = \alpha n^{-1}$ with the specific algebraic tuning $\alpha = T/(3V^2)$, structurally absorbing the third-order skewness penalty into the $q$-exponential measure.
\end{IEEEproof}

While the conventional expansion (Fig.~\ref{fig:breakdown}) additively appends the skewness $\gamma_1$, leading to a polynomial zero-crossing, the $q$-deformed framework absorbs this skewness penalty directly into the parameter $\alpha$ within the generalized exponential structure. Consequently, Equation \eqref{eq:q_info_density} structurally guarantees the strict nonnegativity of the probability measure across the entire domain, rendering it fully valid for deep-tail finite-blocklength analysis.

\begin{remark}
While Theorem 1 establishes the exact structural absorption up to the third-order moment (skewness), an extension to strictly absorb the fourth-order moment (kurtosis) is possible. This would require a higher-order dynamic scaling of the deformation parameter, which is beyond the scope of this paper and remains an important subject of our ongoing work.
\end{remark}

\section{Asymptotic Matching at Higher Orders}
\label{sec:higher_order}

One might intuitively argue that the deep-tail inconsistency of the third-order Edgeworth expansion could be resolved by evaluating higher-order perturbative terms (e.g., the fourth-order kurtosis correction). However, this is incorrect from an asymptotic viewpoint.
Because the Edgeworth series is an asymptotic expansion rather than a convergent power series, its higher-order terms are governed by higher-degree Hermite polynomials. For a fixed, finite blocklength $n$, appending higher-degree polynomials introduces additional roots, which exacerbates the oscillatory behavior in the deep-tail regime. As visually demonstrated earlier in Fig.~\ref{fig:breakdown}, higher-order approximations generate even more severe zero-crossing regions, demonstrating that polynomial perturbation is structurally inadequate for large deviation evaluation.

By contrast, the proposed $q$-deformed framework universally encapsulates higher-order fluctuations without relying on additive polynomials. This structural correspondence is formally stated as follows.

\begin{theorem}[Universal Asymptotic Matching]
\label{thm:resonance}
Let the centered fluctuation be $W_n = O(n^{1/2})$ and the generalized parameter scale as $1-q_n = O(n^{-1})$. The $k$-th degree term of the $q$-deformed expansion exactly matches the asymptotic order of the $(k+1)$-th moment correction in the classical Edgeworth expansion, yielding the universal order $O(n^{1-k/2})$.
\end{theorem}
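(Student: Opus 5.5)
We expand the definition \eqref{eq:q_info_density} as a full power series in $\delta_n := 1-q_n$, exactly as in the proof of Theorem~\ref{thm:absorption} but without truncating at second order. Writing $\exp(\delta_n W_n) = \sum_{m\ge 0} \delta_n^m W_n^m/m!$ and subtracting its expectation term by term gives
\begin{equation*}
    i_{q_n}(X^n) - nH = \sum_{k\ge 1} \frac{\delta_n^{k-1}}{k!}\left(W_n^k - \mathbb{E}[W_n^k]\right).
\end{equation*}
We call the summand indexed by $k$ the $k$-th degree term. It is well defined because $W_n$ is a finite sum of bounded variables for a finite alphabet, so its MGF is entire and the interchange of sum and expectation is justified.

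The first step is to bound the order of the $k$-th degree term. Under the hypotheses $W_n = O(n^{1/2})$ (in the sense $W_n = \sqrt{nV}Z$ with $Z = O_P(1)$) and $\delta_n = O(n^{-1})$, the random part satisfies $W_n^k = O(n^{k/2})$. For the centering constant we use the standard cumulant bound $\mathbb{E}[W_n^k] = O(n^{\lfloor k/2\rfloor})$, which is at most $O(n^{k/2})$. Hence the $k$-th degree term is
\begin{equation*}
    O(n^{-(k-1)})\cdot O(n^{k/2}) = O(n^{1-k/2}).
\end{equation*}
This gives $O(n^{1/2})$ for $k=1$, $O(1)$ for $k=2$, and $O(n^{-1/2})$ for $k=3$.

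The second step is to bound the orders of the Edgeworth/Cornish--Fisher corrections. The $r$-th cumulant of $W_n$ is $n\kappa_r$. After standardization by $\sqrt{nV}$, the standardized cumulant is $\kappa_r/(V^{r/2} n^{r/2-1})$. In the density \eqref{eq:edgeworth_pdf} the $r$-th moment correction therefore enters with weight $n^{1-r/2}$. Rescaling back to the $W_n$ (threshold) scale, as in \eqref{eq:edgeworth_expansion}, multiplies by $\sqrt{n}$, so the $(k+1)$-th cumulant correction to the quantile is of order $n^{1/2}\cdot n^{1-(k+1)/2} = n^{1-k/2}$. This agrees with the first step, and substituting $r=k+1$ exhibits the shift of indices claimed in the theorem. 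The case $k=2$ reproduces the $O(1)$ skewness term $\frac{T}{6V}(Z_\epsilon^2-1)$ already matched in Theorem~\ref{thm:absorption}, and serves as a consistency check.

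The main obstacle is that the statement only asserts order matching, yet the leading Edgeworth quantile terms are of size $n^{1/2}, 1, n^{-1/2},\dots$, indexed by cumulant order $r=2,3,4,\dots$, and they must be paired with $k=r-1$ consistently. The cumulant indexing on the Edgeworth side and the index $k$ on the $q$-side have to be aligned carefully so the off-by-one works for every $k\ge 1$. A second obstacle is justifying that the centering terms $\mathbb{E}[W_n^k]$ never exceed the claimed order. For odd $k$ they are in fact smaller, of order $O(n^{(k-1)/2})$, so they only help. I would settle this with the moment–cumulant formula and stop short of claiming coefficient-level equality beyond $k=2$, since that would require the higher-order tuning of $q_n$ mentioned in the Remark.
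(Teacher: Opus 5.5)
Your proposal is correct and follows the paper's proof: expand \eqref{eq:q_info_density} into the full series $\sum_{k\ge 1}\frac{\delta_n^{k-1}}{k!}\bigl(W_n^k-\mathbb{E}[W_n^k]\bigr)$, count orders as $O(n^{-(k-1)})\cdot O(n^{k/2})=O(n^{1-k/2})$, and pair degree $k$ with cumulant order $k+1$. Your extra steps go beyond what the paper writes, which only asserts these points: reading $W_n=O(n^{1/2})$ in the $O_P$ sense, bounding the centering constants by $\mathbb{E}[W_n^k]=O(n^{\lfloor k/2\rfloor})$, and deriving the Edgeworth-side order $n^{1/2}\cdot n^{1-(k+1)/2}$ from the standardized cumulants. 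You are also right not to claim coefficient-level matching beyond $k=2$.
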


\begin{IEEEproof}
By expanding the centralized $q$-information density \eqref{eq:q_info_density} via the Taylor series of the exponential function, we obtain:
\begin{equation}
    i_{q_n}(X^n) = nH + W_n + \sum_{k=2}^{\infty} \frac{(1-q_n)^{k-1}}{k!} \left( W_n^k - \mathbb{E}[W_n^k] \right).
    \label{eq:general_higher_order_expansion}
\end{equation}
By substituting the standard growth rate of the fluctuation $W_n = O(n^{1/2})$ and the dynamic scaling limit $1-q_n = O(n^{-1})$ into the $k$-th degree term, the asymptotic order rigorously evaluates to:
\begin{equation}
    O\left( (n^{-1})^{k-1} \right) \times O\left( (n^{1/2})^k \right) = O\left( n^{1-k/2} \right).
    \label{eq:general_order_resonance}
\end{equation}
Evaluating \eqref{eq:general_order_resonance} for successive values of $k$ reveals a systematic structural alignment with the classical finite-blocklength limits:
\begin{itemize}
    \item $k=1 \implies O(n^{1/2})$: Recovers the normal fluctuation (varentropy penalty).
    \item $k=2 \implies O(1)$: Generates the third-order skewness correction, which is exactly absorbed via $\alpha = T/(3V^2)$ as proven in Theorem \ref{thm:absorption}.
    \item $k=3 \implies O(n^{-1/2})$: Matches the asymptotic order of the fourth-order kurtosis penalty.
    \item $k=m \implies O(n^{1-m/2})$: Aligns with the $(m+1)$-th moment correction order for any integer $m \ge 1$.
\end{itemize}
\end{IEEEproof}

The generalized order relation $O(n^{1-k/2})$ indicates that the dynamic parameter $q_n$ acts as an exact algebraic generator for finite-length corrections. The single scaling rule $1-q_n = O(n^{-1})$ establishes a structural correspondence that systematically offsets the anomalous behavior of higher-order moments, entirely obviating the combinatorial complexity and structural instability of accumulating orthogonal polynomials.

While this paper has rigorously proven the exact absorption of the non-Gaussian fluctuations up to the third order (skewness), the explicit emergence of the $O(n^{-1/2})$ scale for $k=3$ suggests that this generalized framework structurally encapsulates the kurtosis and beyond. 

\section{Information-Theoretic Consequences of Truncation}
\label{sec:kl_divergence}

As established in Section \ref{sec:breakdown}, the finite-order Edgeworth expansion inevitably yields unphysical negative probability densities in the deep-tail large deviation regime. In engineering practice, a common ad-hoc workaround is to apply a "truncate-and-renormalize" procedure: negative values are artificially set to zero, and the remaining distribution is rescaled to ensure a valid probability measure. While this might seem like a trivial fix for numerical evaluations, we demonstrate that it structurally destroys the information-theoretic consistency of the model.

To formalize this, let $P_Z(z)$ be the true continuous probability density function of the standardized information density. For a continuous memoryless source, $P_Z(z)$ remains strictly positive over its entire support. Let $Q_{\text{trunc}}(z)$ denote the distribution approximated by the truncated Edgeworth expansion, defined as:
\begin{equation}
    Q_{\text{trunc}}(z) = \frac{1}{\mathcal{Z}} \max\left\{0, \phi(z) \left[ 1 + \frac{\gamma_1}{6\sqrt{n}} (z^3 - 3z) \right] \right\},
\end{equation}
where $\mathcal{Z}$ is the normalization constant required to ensure $\int Q_{\text{trunc}}(z)dz = 1$.

Because the uncorrected polynomial perturbation crosses zero, there necessarily exists a deep-tail region $\mathcal{R}$ where $Q_{\text{trunc}}(z) = 0$, whereas the true distribution maintains $P_Z(z) > 0$. In information theory, the fundamental distance measure between two distributions is the Kullback-Leibler (KL) divergence (or relative entropy), which dictates the fundamental limits of source coding and channel coding (e.g., via Sanov's theorem). Evaluating the KL divergence from the true distribution $P_Z$ to the truncated approximation $Q_{\text{trunc}}$ yields:
\begin{align}
    &D(P_Z \parallel Q_{\text{trunc}}) = \int_{-\infty}^{\infty} P_Z(z) \ln \frac{P_Z(z)}{Q_{\text{trunc}}(z)} dz \notag \\
    &= \int_{\mathbb{R} \setminus \mathcal{R}} P_Z(z) \ln \frac{P_Z(z)}{Q_{\text{trunc}}(z)} dz + \int_{\mathcal{R}} P_Z(z) \ln \frac{P_Z(z)}{0} dz \notag \\
    &\to \infty.
    \label{eq:kl_divergence_infinity}
\end{align}

The divergence to infinity in \eqref{eq:kl_divergence_infinity} implies a severe breakdown of the truncated model. It demonstrates that substituting $Q_{\text{trunc}}$ for the true distribution in any rigorous information-theoretic error exponent analysis or capacity bound derivation is mathematically invalid, as the fundamental distance measure completely breaks down.

In contrast, the proposed $q$-deformed framework algebraically absorbs the finite-size skewness without introducing polynomial roots. As detailed in Section \ref{sec:q_algebra}, the $q$-generalized probability density $Q_q(z)$ derived from the centralized $q$-information density intrinsically guarantees nonnegativity across the relevant domain by dynamically scaling $1-q_n = O(n^{-1})$. Consequently, the support of $Q_q(z)$ fully covers that of $P_Z(z)$, ensuring that the KL divergence remains strictly finite:
\begin{equation}
    D(P_Z \parallel Q_q) < \infty.
\end{equation}

This global stability provides a definitive practical benefit over trivial non-negative fixes. The $q$-deformed framework establishes a structurally consistent probabilistic model that prevents the mathematical breakdown of error-probability estimations, offering a robust foundation for evaluating operational limits in ultra-reliable finite-blocklength communications.

\section{Application to Finite-Blocklength Coding Rate}
\label{sec:coding_rate}

The ultimate objective of FBL information theory is to evaluate the maximal achievable coding rate $R^*(n, \epsilon)$ for a given blocklength $n$ and error probability $\epsilon$. The standard asymptotic expansion pioneered by Polyanskiy et al. \cite{Polyanskiy2010} approximates this rate as:
\begin{equation}
    R^*(n, \epsilon) = C + \sqrt{\frac{V}{n}} Z_\epsilon + O\left(\frac{\log n}{n}\right),
    \label{eq:polyanskiy_rate}
\end{equation}
where $C$ is the channel capacity (or source entropy). To achieve higher precision for asymmetric systems, the $O(1/n)$ skewness correction derived from the Cornish-Fisher expansion is conventionally appended:
\begin{equation}
    R_{\text{edge}}^*(n, \epsilon) = C + \sqrt{\frac{V}{n}} Z_\epsilon + \frac{T}{6V n}(Z_\epsilon^2 - 1).
    \label{eq:rate_with_skewness}
\end{equation}

However, as demonstrated in Section \ref{sec:breakdown}, translating the additive polynomial boundary $L_{\text{edge}}$ directly into the rate formulation inherits the zero-crossing issue in the ultra-low error regime (i.e., deep tail). 

By contrast, substituting the $q$-generalized threshold $L_q$ into the rate derivation yields the $q$-deformed achievable rate:
\begin{equation}
    R_q^*(n, \epsilon) = C + \sqrt{\frac{V}{n}} Z_\epsilon + \frac{1-q_n}{2} V (Z_\epsilon^2 - 1).
    \label{eq:q_rate}
\end{equation}
When the scaling law $1-q_n = \frac{T}{3V^2} n^{-1}$ is applied, \eqref{eq:q_rate} exactly recovers the asymptotic scaling of \eqref{eq:rate_with_skewness} up to the $O(n^{-1})$ penalty term. More importantly, because $R_q^*$ is structurally derived from the centralized $q$-information density $i_{q_n}$---which strictly guarantees the nonnegativity of the probability density across the entire domain by algebraically absorbing the finite-size penalty, unlike truncated polynomials---it provides a mathematically consistent formulation for FBL network designs (e.g., ultra-reliable low-latency communications in 6G) where conventional additive approximations structurally fail.

The practical impact of this exact structural absorption is visualized in Fig.~\ref{fig:coding_rate}, which compares the achievable finite-blocklength source coding rates for a target error probability $\epsilon = 10^{-3}$. The exact finite-length limit (staircase) demonstrates the discrete nature of the BMS. While the standard Gaussian approximation strictly deviates from the true bound, both the third-order Cornish-Fisher expansion and the proposed $q$-deformed framework precisely capture the finite-size correction, achieving high precision. 

However, the critical distinction lies in their structural foundation. As demonstrated in Section \ref{sec:numerical}, the Cornish-Fisher rate \eqref{eq:rate_with_skewness} achieves this tightness by implicitly integrating over unphysical negative probabilities in the deep tail. Conversely, the $q$-generalized rate \eqref{eq:q_rate} delivers the exact same state-of-the-art precision while being rigorously derived from a globally consistent, nonnegative probability measure.

The numerical agreement between $R_{\text{edge}}^*$ and $R_q^*$ in the third-order regime might intuitively suggest that the conventional approach is sufficient. However, this relies on evaluating the rate solely at a specific quantile threshold, masking the structural breakdown. As demonstrated in Section \ref{sec:numerical}, the Cornish-Fisher rate achieves this specific value by implicitly yielding unphysical negative probabilities in the deep-tail domain. 

Furthermore, extending the conventional threshold formulation $L_{\text{edge}}$ to higher orders requires inverting higher-degree Edgeworth polynomials, leading to a combinatorial explosion of mixed moments that is analytically intractable. Conversely, the $q$-generalized rate \eqref{eq:q_rate} delivers the exact same state-of-the-art precision while being rigorously derived from a globally consistent, nonnegative probability measure. Supported by the universal asymptotic matching (Theorem \ref{thm:resonance}), the proposed framework provides a mathematically sound and highly extensible foundation for evaluating operational limits in ultra-reliable FBL systems, where conventional additive approximations structurally and analytically fail.

\begin{figure}[tbp]
    \centerline{\includegraphics[width=0.90\columnwidth]{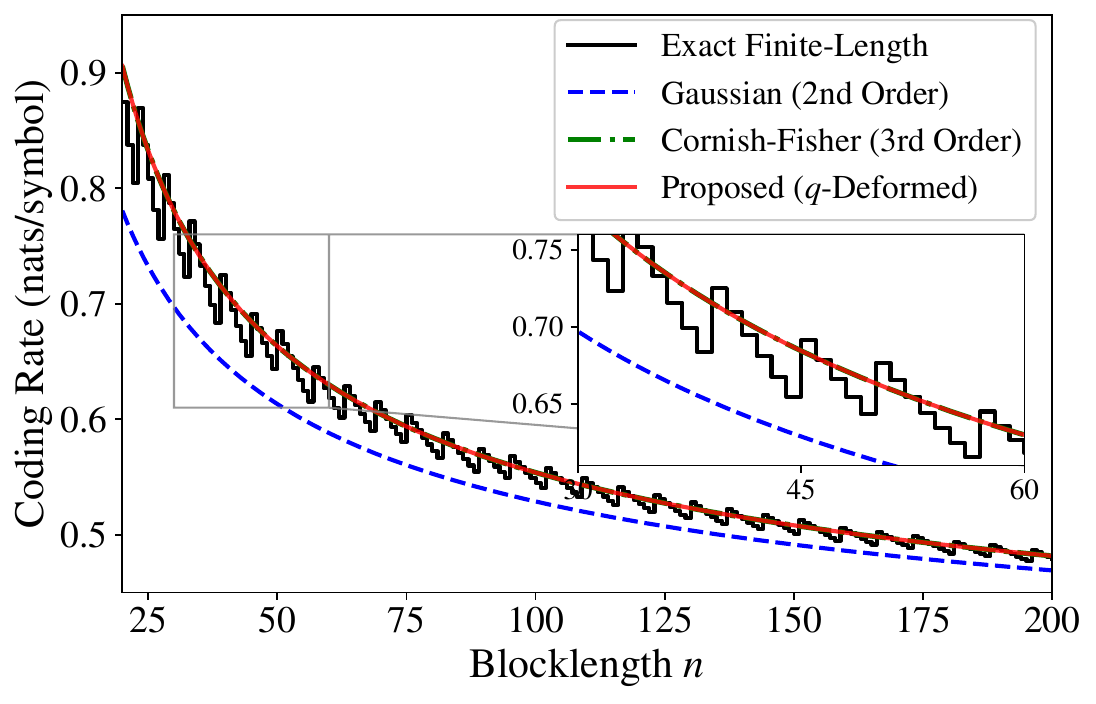}}
    \caption{Achievable finite-blocklength coding rates for the asymmetric BMS ($P_X(0)=0.1, \epsilon=10^{-3}$). The proposed $q$-deformed bound (red solid line) mathematically coincides with the highly accurate third-order Cornish-Fisher bound (green dashed line), precisely matching the exact discrete limit. It achieves this precision while structurally guaranteeing probability nonnegativity, overcoming the structural limitation of polynomial approximations.}
    \label{fig:coding_rate}
\end{figure}

\section{Numerical Case Study: Global Consistency}
\label{sec:numerical}

To concretely illustrate the structural absorption mechanism, we revisit the highly asymmetric binary memoryless source (BMS) introduced in Section \ref{sec:breakdown}. For this source ($P_X(0)=0.1$), the fundamental information-theoretic moments (in nats) are $H \simeq 0.325$, $V \simeq 0.434$, and $T \simeq 0.764$.

By applying Theorem \ref{thm:absorption}, the dynamic tuning parameter $\alpha$ is exactly evaluated as:
\begin{equation}
    \alpha = \frac{T}{3V^2} = \frac{0.764}{3 \times (0.434)^2} \simeq 1.35.
\end{equation}
Enforcing the scaling law $1-q_n = 1.35 n^{-1}$ completely embeds the third-order non-Gaussian fluctuation into the $q$-exponential family.

While Fig.~\ref{fig:breakdown} highlighted the local structural breakdown of the Edgeworth expansion in the extreme deep-tail regime, Fig.~\ref{fig:comparison} demonstrates the global behavior of the probability distributions. As observed in the main plot, all approximations successfully coincide in the bulk of the distribution (near $z \simeq 0$). 

\begin{figure}[htbp]
    \centerline{\includegraphics[width=\columnwidth]{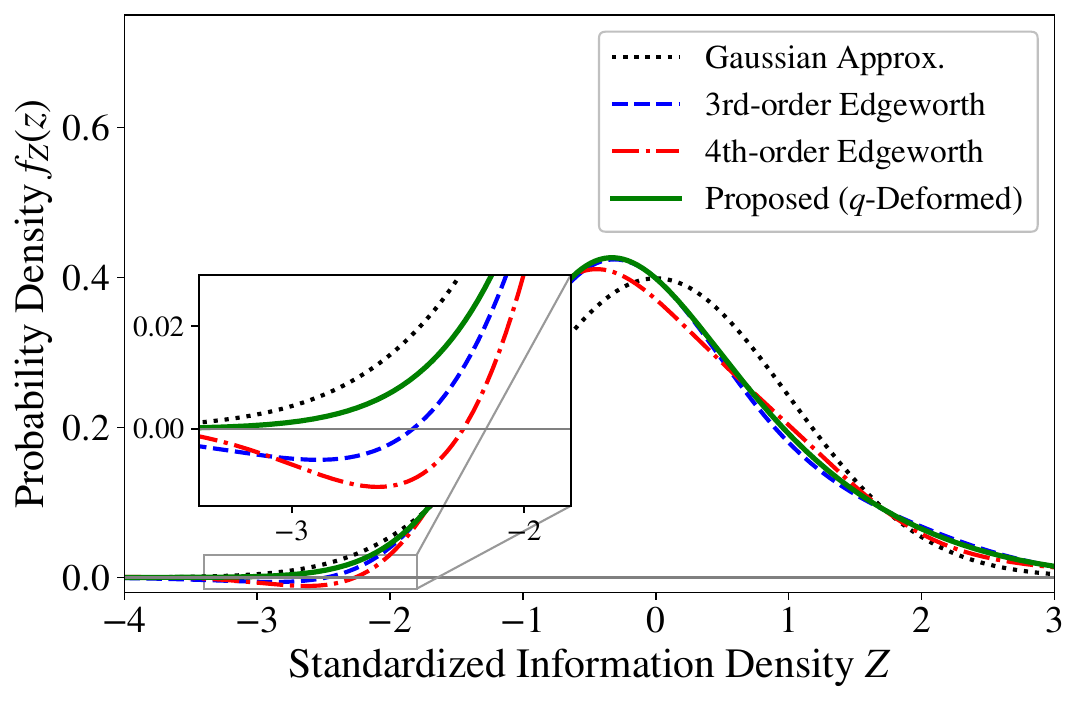}}
    \caption{Probability density functions for the asymmetric BMS ($n=12, P_X(0)=0.1$). Main plot: The overall distribution. Inset: Magnified view of the deep-tail regime where the conventional Edgeworth expansions (3rd and 4th order) become unphysically negative, while the proposed $q$-deformed framework exactly absorbs the skewness and remains nonnegative.}
    \label{fig:comparison}
\end{figure}

However, as highlighted by the inset, only the proposed $q$-deformed framework strictly maintains nonnegativity while accommodating the positive skewness ($\gamma_1/\sqrt{n} \simeq 0.77$ for $n=12$). The conventional cubic correction $\frac{\gamma_1}{6\sqrt{n}} (z^3 - 3z)$ forces the probability density to cross zero at $z \simeq -2.45$, and the fourth-order correction explicitly deepens this unphysical trough. In contrast, our framework stretches the probability tail geometrically. This confirms that a single algebraic operation structurally resolves the zero-crossing issue globally, establishing a theoretically consistent and computationally stable model.

\section{Discussion: Operational Tightness and Conservative System Guarantee}
\label{sec:discussion}

The proposed $q$-deformed framework resolves the nonnegativity breakdown of the Edgeworth expansion. Beyond resolving this mathematical inconsistency, this framework provides critical theoretical and practical advantages for finite-blocklength system designs.

\subsection{Unsafe Underestimation vs. Conservative Upper Bound}
In information theory, accurately estimating the deep-tail error probability is essential for URLLC systems. When the conventional Edgeworth expansion yields negative probabilities, an ad-hoc "truncate-and-renormalize" workaround is often applied. However, truncating the negative density implies that the probability of extremely rare events is artificially evaluated as strictly zero. In practical network design, this results in an \textit{unsafe underestimation}: a system might be falsely certified to meet ultra-reliable targets (e.g., $\epsilon = 10^{-9}$), leading to severe reliability degradation in practical deployments.

By contrast, the $q$-deformed framework algebraically absorbs the skewness to strictly preserve the probability measure's nonnegativity. As a mathematical consequence of preventing polynomial zero-crossings and maintaining a valid measure, the proposed model slightly overestimates the error probability in the extreme deep-tail regime compared to the exact limit. In the context of performance bounds (e.g., achievability bounds), providing an overestimation serves as a \textit{conservative upper bound}. Designing a communication system based on this conservative bound is theoretically sound, as it guarantees that the physical system's true reliability will strictly exceed the modeled estimates.

\subsection{Operational Tightness in Practice}
One might argue that a conservative bound is only useful if it does not overly penalize the system design. We emphasize that the $q$-deformed framework is not a loose bound; rather, it maintains operational tightness. As proven in Theorem 2, the dynamic scaling $1-q_n = O(n^{-1})$ ensures that the framework exactly matches the $O(n^{-1/2})$ finite-size asymptotic behavior of the source.

\begin{figure}[htbp]
    \centerline{\includegraphics[width=\columnwidth]{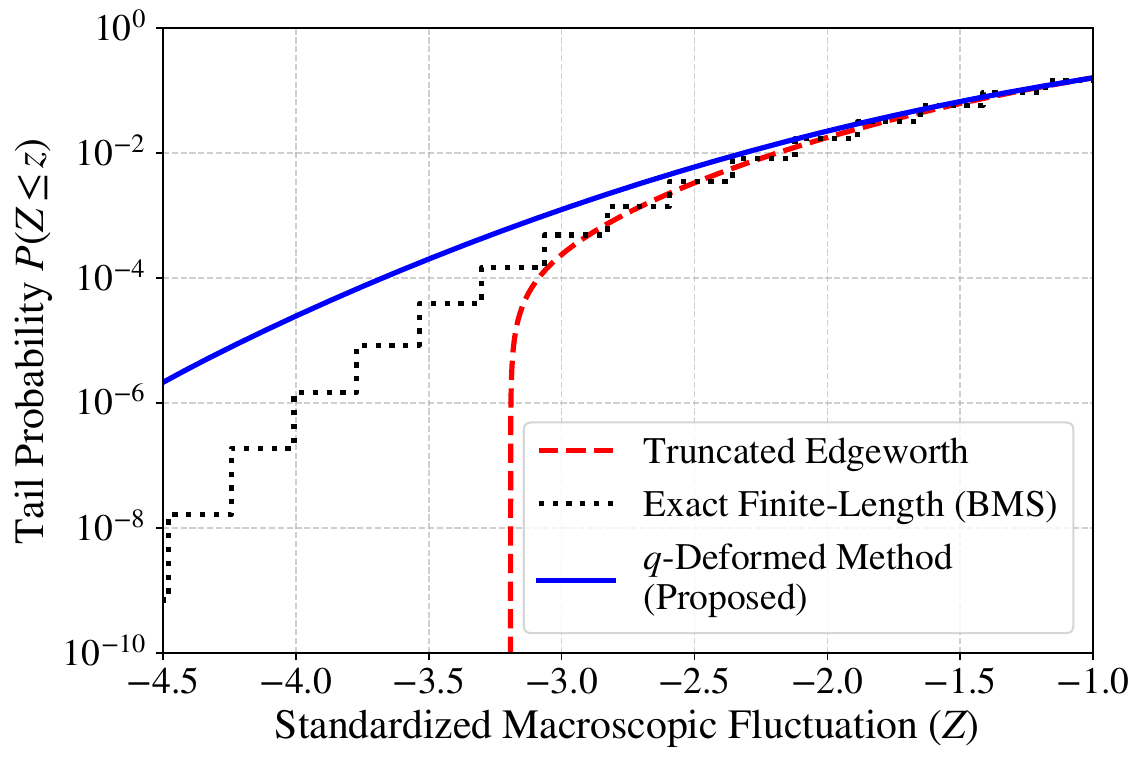}}
    \caption{Logarithmic tail probability $P(Z \le z)$ in the deep-tail regime for the asymmetric binary memoryless source (BMS) with $n=200$ and $P_X(0)=0.1$. The exact limit (dotted black line) exhibits a staircase structure reflecting the discrete nature of the source. The truncated Edgeworth expansion (dashed red line) drops to zero (rendering as a vertical drop in log scale), severely underestimating the error probability. The proposed $q$-deformed method (solid blue line) maintains nonnegativity, providing a mathematically safe, conservative upper bound that remains operationally tight to the discrete limit.}
    \label{fig:log_tail_probability}
\end{figure}

To quantify this, Fig.~\ref{fig:log_tail_probability} illustrates the logarithmic tail probability for the asymmetric BMS introduced in Section VII. Here, we evaluate a longer blocklength of $n=200$ to ensure the physical support of the discrete source extends sufficiently deep into the large deviation regime. The exact limit is represented by a step function due to the discrete nature of the source. The truncated Edgeworth expansion exhibits a sharp vertical drop where the estimated probability vanishes, highlighting its limitation for reliability estimation. Conversely, the $q$-deformed probability closely traces the exact staircase limit. While it provides a conservative overestimation to preserve the probability axioms, the gap remains sufficiently small within practical operating regimes (e.g., down to $10^{-6}$), ensuring that system resources (such as blocklength or power) are not unnecessarily wasted.

\subsection{Algorithmic Stability for 6G Implementations}
Furthermore, evaluating conventional bounds with higher-degree Hermite polynomials involves combinatorial complexity and severe numerical oscillation. The $q$-information density (3) relies on a closed-form geometric scaling via the single parameter $\alpha$. This obviates polynomial perturbations, providing a globally stable, non-oscillatory, and computationally efficient foundation for real-time evaluation in future 6G URLLC systems.Ultimately, this framework bridges FBL information theory and generalized statistical mechanics \cite{Tsallis2009, Suyari2026_JPA}. The scaling $1-q_n = O(n^{-1})$ is not merely a mathematical artifact to avoid negative probabilities, but a rigorous methodology that captures the intrinsic non-extensive geometry induced by finite-size constraints \cite{Suyari2026_PhysicaA_b}. Specifically, the exact dynamic tuning $\alpha = T / (3V^2)$ reveals a profound physical correspondence: the information-theoretic penalties at short blocklengths mathematically act as an effective non-extensive interaction among symbols. This formalizes the profound connection between the macroscopic limits in Shannon theory and the generalized thermodynamic formalism for finite-size complex systems, establishing a unified foundation where finite-size fluctuations and stability in power-law statistics are algebraically absorbed.

\section{Conclusion}
\label{sec:conclusion}
This paper revealed a structural limitation in applying the conventional Edgeworth expansion to FBL information theory: additive polynomial corrections yield unphysical negative probabilities in the deep-tail regime. To resolve this, we introduced the centralized $q$-information density. We proved that by scaling the deformation parameter dynamically as $1-q_n = O(n^{-1})$, the $q$-deformed framework exactly absorbs the dominant skewness, matching the third-order Cornish-Fisher bound. 

Furthermore, this structural deformation encapsulates higher-order asymptotic scales $O(n^{1-k/2})$ without introducing artificial polynomial roots. This exact structural absorption offers a globally stable, nonnegativity-preserving paradigm for large deviation evaluations in future finite-blocklength analysis, ensuring theoretically sound achievability bounds for ultra-reliable communication networks.

\section*{Acknowledgment}
This work was supported by JSPS KAKENHI Grant Number 26K14703.


\end{document}